\newtheorem{theorem}{Theorem}[section]
\newtheorem{corollary}[theorem]{Corollary}
\newtheorem{lemma}[theorem]{Lemma}
\newtheorem{proposition}[theorem]{Proposition}
\newtheorem{observation}[theorem]{Observation}
\newcommand{\sep}{\ensuremath{\mathbf{c'}}} 
\newcommand{\Sep}{\ensuremath{\mathbf{C'}}}
\newcommand{\CP}{\ensuremath{C_P}}  
\newcommand{\cp}{\ensuremath{c_P}}  
\newcommand{\tang}{\ensuremath{\mathbf{\phi'}}}
\def\hull#1{{\text{CH}(#1)}}
\begin{document}

\title{Circle separability queries in logarithmic time}

\author{Greg Aloupis $^*$ \and Luis Barba \thanks{D\'epartement d'Informatique, Universit\'e Libre de Bruxelles, Brussels, Belgium}\and Stefan Langerman $^*$
}
\date{}

\maketitle

\begin{abstract}
Let $P$ be a set of $n$ points in the plane.
In this paper we study a new variant of the circular separability problem in which
a point set $P$ is preprocessed so that one can quickly answer queries of the following form:
Given a  geometric object $Q$, 
report the minimum circle containing $P$ and exluding $Q$.
Our data structure can be constructed in $O(n\log n)$ time using $O(n)$ space, 
and can be used to answer the query when $Q$ is either a circle or a convex $m$-gon in $O(\log n)$ or $O(\log n + \log m)$ time, respectively.
\end{abstract}

\section{Introduction}

The planar separability problem consists of constructing, if possible, a boundary that  separates the plane into two components such that two given sets of geometric objects become isolated.
Typically this boundary is a single curve such as a line, circle or simple polygon, meaning that each component of the plane is connected.
Often there is an additional objective of minimizing some feature of the boundary, for instance the radius of a circle. 

Probably the most classic instance of this problem is to separate two given point sets with a circle (or a line, which is equivalent to an infinitely large circle).  
This problem was developed because of the several applications it has in pattern recognition and image processing~\cite{SeparatingPointsByCricles,DigitalDiskAndCompacness}.
Circle separability has been extensively studied and usually two variants are considered: The decision problem, where the existence of the separating circle is tested, and the Min-Max problem, where the minimum or maximum separating circle is to be determined.

A separating line can be found, if it exists, using linear programing. In the plane this takes linear time by Megiddo's algorithm~\cite{LinearProgrammingInLinearTime}.
For circle separability, O'Rourke, Kosaraju and Megiddo~\cite{ComputingCircularSeparability} gave a linear-time algorithm for the decision problem (in fact spherical separability in any fixed dimension), improving earlier bounds~\cite{ComputingCircularSeparabilityOfPlanarPointSets,DigitalDiskAndCompacness}.
They also gave an $O(n\log n)$ algorithm for finding the largest separating circle and a linear time algorithm for finding the minimum separating circle between any two finite point sets.
Extending these ideas, Boissonat et al.~\cite{boissonat-etal} gave a linear-time algorithm to report the smallest separating circle for two simple polygons, if it exists.

The separability problem has also been studied when two point sets are to be isolated by the boundary of a simple polygon.
Edelsbrunner and Preparata~\cite{MinimumPolygonalSeparation} gave an $O(n\log n)$-time algorithm to find a separating convex polygon with minimum number of vertices. 
This was shown to be optimal if the optimal separator has linear size.  
They also gave a $O(kn)$ algorithm to find a convex polygonal separator with either $k$ or $k{+}1$ vertices, where $k$ is the size of the optimal solution.

Aggarwal et al.~\cite{Aggarwal:1985} gave an $O(n\log k)$-time algorithm to find the separating (convex) polygon with fewest vertices, between  two nested convex polygons. Again, $k$ is the size of the optimal separator.

Das and Joseph extended the problem to higher dimensions and 
proved that the problem of computing a separating polyhedron, 
having the minimum number of faces, 
for two nested convex polyhedra is NP-complete~\cite{DasJoseph:1990}.
Suri and O'Rourke~\cite{SO} gave a quadratic time algorithm to find a polygonal separator of two simple, not necessarily convex, nested polygons. 
This was improved to $O(n\log n)$ time by Wang~\cite{Wang}.
Finally, Wang and Chan~\cite{WangChan} gave an $O(n\log n)$ algorithm to find a minimal polygonal separator between two arbitrary, not necessarily nested, simple polygons.

In some situations, the given geometric objects and the separating boundary are constrained to lie within some subset of the plane, such as a simple polygon.   
For instance, Demaine et al.~\cite{8people} studied the separation of point sets by chords and geodesic paths inside polygons.

The online version of the separability problems, in which a preprocess is allowed to answer separability queries, is of special interest when a lot of queries are to be processed. 
This is the case in areas like geometric modeling involving objects in motion, where collision detection or largest empty space recognition queries are extensively used. Therefore, the online variants of some separability problems have been studied as well:
Augustine et al~\cite{augustine-etal} show how to preprocess a point set $P$,
so that the largest circle, isolating $P$ from a query point, can be found in logarithmic time.
They also obtain this query time when $P$ represents the boundary of a simple polygon.

For the line separability problem, Edelsbrunner show that a point set $P$ can be preprocessed in $O(n\log n)$ time, 
so that a separating line between $P$ and a query convex $m$-gon $Q$ can be computed in $O(\log n + \log m)$ time~\cite{ComputingExtremeDistancesBetweenConvexPolygons}.
In 3D space, Dobkin and Kirkpatrick show that two convex polyhedra can be preprocessed in linear time, so that a separating plane, if any exists, can be computed in $O(\log n\cdot \log m)$  time, given any orientation and position of both polyhedra. In this case, $n$ and $m$ represent the size of each polyhedra.

In this paper we show that a point set $P$ on $n$ points can be preprocessed in $O(n\log n)$ time, using $O(n)$ space, so that for any given convex $m$-gon $Q$, we can find the smallest circle enclosing $P$ and excluding $Q$ in $O(\log n+\log m)$ time.
This improves the $O(\log n\cdot\log m)$ bound presented in~\cite{DynamicCircleSeparability}, which is described in this paper as well.

\section{Preliminaries}

Let $P$ be a set of $n$ points in the plane and let $Q$ be a convex $m$-gon. 
We say that every circle containing $P$ is a {\em $P$-circle}.
We also say that a circle $C$ separates $P$ from $Q$, or simply that $C$ is a \emph{separating circle}, 
if $C$ is a $P$-circle and its interior does not intersect $Q$. 
Likewise, a \emph{separating line} is a straight line leaving the interiors of $P$ and $Q$ in different halfplanes.

Let $\Sep$ denote the minimum separating circle and let $\sep$ be its center.
Note that $\Sep$ always passes through at least two points of $P$, 
since otherwise a smaller separating circle can always be found.
In fact  $\sep$ must lie on an edge of the farthest-point Voronoi diagram  $\mathcal{V}(P)$, which is a tree with leaves at infinity~\cite{Computational geometry: Algorithms and applications}.
For each point $p$ of $P$, let $R(p)$ be 
the farthest-point Voronoi region associated with $p$.

Let $\CP$ be the minimum enclosing circle of $P$. 
If  $\CP$ is constrained by three points of $P$ then 
its center, $\cp$, is at a vertex 
of $\mathcal{V}(P)$.  Otherwise $\CP$ is constrained by exactly two points of $P$ (forming its diameter),
in which case $\cp$ is on the interior of  an edge of $\mathcal{V}(P)$. If that is the case, we will insert $\cp$ into $\mathcal{V}(P)$ by splitting the edge where it belongs.
Thus, we can think of $\mathcal{V}(P)$ as a rooted tree on $\cp$.
For any given point $x$ on $\mathcal{V}(P)$ there is a unique path along $\mathcal{V}(P)$ joining  $\cp$ with $x$, throughout this paper we will denote this path by $\pi_x$.

Given any point $y$ in the plane, 
let $C(y)$ be the minimum $P$-circle with center on $y$
and let $\rho(y)$ be the radius of $C(y)$.
Note that if $y$ belongs to $R(p)$ for some point $p$ of $P$, 
then $\rho(y)$ is given by $d(y,p)$, where $d(\cdot,\cdot)$ denotes the Euclidian distance between any two geometric objects.
Finally, we say that $y$ is a \emph{separating point} if $C(y)$ is a separating circle.

\section{Properties of the minimum separating circle}\label{section:Resultados}
In this section we  describe some properties of $\Sep$,
and the relationship between $\sep$ 
and the farthest-point Voronoi diagram.
These properties are not new. In fact most of the results in this section are either proved, stated, or assumed in~\cite{DynamicCircleSeparability}.

Let $\hull{P}$ denote the convex hull of $P$.
We assume that  the interiors of $Q$ and $\hull{P}$ are disjoint, 
otherwise there is no separating circle.
Also, if $Q$ and $\CP$ have disjoint interiors, 
then $\CP$ is trivially the minimum separating circle.

The following useful property of the farthest-point Voronoi diagram was mentioned
in~\cite{DynamicCircleSeparability}.   We provide a short proof.

\begin{proposition}\label{Rho is monotonic}
Let $x$ be a point on $\mathcal{V}(P)$. The function $\rho$ is monotonically 
increasing along the path $\pi_{x}$
starting at $\cp$.
\end{proposition}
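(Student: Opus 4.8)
The plan is to exploit the convexity of the radius function $\rho$. First I would record that $\rho(y)=\max_{p\in P} d(y,p)$ is a convex function on the plane, being the pointwise maximum of the convex functions $d(\cdot,p)$, and that its sublevel sets $\{y:\rho(y)\le r\}=\bigcap_{p\in P}\overline{B}(p,r)$ are convex and nested. Its unique global minimizer is $\cp$, with $\rho(\cp)$ equal to the radius of $\CP$; this is just the defining property of the minimum enclosing circle. Since $\pi_x$ is a concatenation of Voronoi edges traversed in the direction of increasing tree-distance from $\cp$, it suffices to prove that $\rho$ is nondecreasing along each such edge when it is oriented away from the root.

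Next I would analyze a single edge. An edge $e$ of $\mathcal V(P)$ lies on the perpendicular bisector of two points $p,q\in P$, and for $y\in e$ we have $\rho(y)=d(y,p)=d(y,q)$. Restricted to the line carrying $e$, the function $d(\cdot,p)$ is convex with a unique minimum at the foot of the perpendicular from $p$, which is exactly the midpoint $M$ of $pq$. Hence $\rho|_e$ is either strictly monotone or first decreases and then increases, the latter happening precisely when $M$ lies in the relative interior of $e$. The key lemma I would prove to exclude the bad case is that $M$ lies on $\mathcal V(P)$ only when $M=\cp$. Indeed, $M$ lies on the farthest-point Voronoi edge of $p,q$ iff $p$ and $q$ are the farthest points of $P$ from $M$, i.e.\ iff all of $P$ is contained in the disk of radius $d(M,p)=|pq|/2$ about $M$; since $p$ and $q$ are antipodal on the boundary of that disk, it is an enclosing disk of radius $|pq|/2$, which must then be the minimum enclosing circle, forcing $M=\cp$. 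Consequently every edge not incident to $\cp$ avoids $M$ in its interior, so $\rho$ is strictly monotone on it, and the two edges created by splitting at $\cp$ in the diameter case both increase outward from $M=\cp$.

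The main obstacle is the last step: showing that on every edge the direction of increase is the one pointing away from the root, rather than toward it. Equivalently, I must rule out an edge along which $\rho$ decreases as tree-distance from $\cp$ grows; by the previous paragraph such a failure can only occur at a vertex, so it amounts to proving that $\cp$ is the only point of $\mathcal V(P)$ at which all incident edges ascend, i.e.\ the only local minimum of $\rho$ on the tree. I would derive this from convexity in one of two equivalent ways. One route is to show directly that $\mathcal V(P)\cap\{y:\rho(y)\le r\}$ is connected for every $r$: each edge meets the convex sublevel set in a single subsegment (convexity of $\rho$ along the edge's line), and a second component would be a subtree on which $\rho$ is bounded by $r$ yet separated from $\cp$ by strictly larger values, hence would contain a tree-local-minimum distinct from $\cp$. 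The other, more hands-on route is to show that at any vertex where every incident edge ascends, the unit vectors from the defining points to the vertex positively span the plane, so that $0$ lies in the subdifferential of $\rho$ there; by convexity this makes the vertex a global minimizer and therefore equal to $\cp$. Either way, once the unique local minimum is identified with the root, monotonicity of $\rho$ along $\pi_x$ follows, since a continuous function on a tree with a single local minimum is nondecreasing along every path emanating from it.
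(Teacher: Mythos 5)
Your proof is correct, but it takes a genuinely different route from the paper's. The paper's own proof is a short reduction: it cites the result of Roy, Karmakar, Das and Nandy (reference \cite{ConstrainedMinimumEnclosingCircleWithCenterOnAQueryLineSegment} in the bibliography) for the fact that $\rho(u)<\rho(v)$ whenever a vertex $u$ is an ancestor of a vertex $v$, and then only observes that $\rho$ is monotone along each individual edge of $\pi_x$ (an edge lies on the bisector of two points $p,p'$, and $\rho(y)=d(y,p)$ there). You instead give a self-contained argument from convexity: $\rho$ is a max of distance functions, hence convex with unique minimizer $\cp$; on each edge it is strictly convex with its minimum at the midpoint $M$ of the two defining sites; your key lemma that $M\in\mathcal{V}(P)$ forces $M=\cp$ (because a disk with two antipodal sites containing $P$ is the minimum enclosing disk) rules out interior local minima on edges; and your subdifferential argument rules out local minima of $\rho$ on the tree at vertices other than $\cp$, after which the tree-topology principle finishes the job. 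What each approach buys: the paper's proof is two lines but outsources the essential content (the direction of increase) to an external reference, and its edge argument taken alone is nearly tautological; yours is longer but self-contained, yields strict monotonicity, and explains the structural reason the proposition holds, namely that $\cp$ is the unique local minimum of $\rho$ on $\mathcal{V}(P)$.

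Two small points deserve tightening if you write this up fully. First, the claim that ``every incident edge ascends implies the vectors $u_i=(v-p_i)/d(v,p_i)$ positively span the plane'' needs the local structure of the farthest-point diagram at a vertex: if the $u_i$ lie in a closed halfplane, the angularly widest consecutive gap exceeds or equals $\pi$, and the bisector direction in the middle of that gap really is an incident edge, along which $\rho$ is nonincreasing; the degenerate case of a gap exactly $\pi$ makes two of the $u_i$ antipodal, so $v$ is the midpoint of two sites and your key lemma already gives $v=\cp$. Second, the tree principle (``single local minimum implies nondecreasing along paths from it'') should be justified by a compactness argument on sublevel sets, which is available here since $\rho\to\infty$ along the unbounded edges; as stated for arbitrary continuous functions it is true but not immediate, because a path-local minimum need not be a tree-local minimum. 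Neither issue is a flaw in the approach; both are filled by arguments you essentially already have on hand.
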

\begin{proof}
Let $u,v$ be two vertices of $\mathcal{V}(P)$.
In~\cite{ConstrainedMinimumEnclosingCircleWithCenterOnAQueryLineSegment}, they proved that if $u$ is an ancestor of $v$, then $\rho(u)< \rho(v)$. 
It only remains to prove that $\rho$ is also monotonically increasing along every edge of $\pi_x$.
Let $e$ be an edge of $\pi_x$ contained in the bisector of the points $p,p'$ of $P$, 
and consider a point $y$ moving along $e$. 
Note that as $d=d(y,p)=d(y,p')$ increases, 
the radius of $C(y)$ also increases since $y$ lies on $R(p)\cap R(p')$.
Thus, by moving the point $y$ we obtain that $\rho(y)$ increases monotonically on every edge along $\pi_x$.
\end{proof}

\begin{observation}\label{Relation P-circle separating circle}
Every $P$-circle contained in a separating circle is also a separating circle.
\end{observation}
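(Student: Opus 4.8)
The plan is to unwind the two definitions at play and observe that being separating is inherited by any $P$-circle that sits inside a separating circle. Write $C$ for a separating circle and let $C'$ be any $P$-circle contained in $C$. Since $C'$ is assumed to be a $P$-circle, it already encloses $P$, so the only thing left to verify is that the interior of $C'$ avoids $Q$; once that is established, $C'$ meets both clauses of the definition of a separating circle and we are done.

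First I would make precise what \emph{contained in} means here: the closed disk bounded by $C'$ is a subset of the closed disk bounded by $C$. From this inclusion of the closed disks I would deduce the corresponding inclusion for the open regions, namely that the interior of $C'$ is a subset of the interior of $C$. This is the single geometric fact the argument rests on, and it is immediate for nested disks. Next I would invoke the hypothesis that $C$ is separating, which by definition says that the interior of $C$ does not intersect $Q$. Combining this with the inclusion just obtained, the interior of $C'$, being a subset of the interior of $C$, cannot intersect $Q$ either. Hence $C'$ is a $P$-circle whose interior is disjoint from $Q$, i.e.\ a separating circle, as claimed.

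The argument is essentially definitional, so there is no substantive obstacle. The only point that demands any care is the passage from containment of the closed disks to containment of their interiors, together with the elementary remark that disjointness from $Q$ is a monotone property under shrinking the disk. In particular, neither the convexity of $Q$ nor any property of the farthest-point Voronoi diagram $\mathcal{V}(P)$ is needed for this observation.
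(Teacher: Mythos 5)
Your proof is correct: the paper states this as an observation with no proof at all, treating it as immediate from the definitions, and your definitional unwinding (nested closed disks give nested interiors, and disjointness from $Q$ is inherited by subsets) is precisely the implicit justification. Nothing further is needed.
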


The following is stated in~\cite{DynamicCircleSeparability}.  We provide a brief proof.
\begin{proposition}\label{CirculosEnSegmento}
Let $x$ and $y$ be two points in $\mathbb{R}^2$. If $z$ is a point contained in the segment $[x,y]$, 
then $C(z)\subseteq C(x)\cup C(y)$.
\end{proposition}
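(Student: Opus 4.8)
The plan is to treat each circle $C(\cdot)$ as the closed disk it bounds, and to reduce the statement to a one-point version of itself. Let $p$ be a point of $P$ realizing the radius of $C(z)$, so that $\rho(z)=d(z,p)$ and $p$ lies on the boundary of $C(z)$. Since $\rho(x)$ and $\rho(y)$ are the \emph{maxima} of the distances from $x$ and $y$ to the points of $P$, we have $d(x,p)\le\rho(x)$ and $d(y,p)\le\rho(y)$; hence the disk centered at $x$ passing through $p$ is contained in $C(x)$, and likewise for $y$. Consequently it suffices to prove the statement for the single point $p$: writing $D(a,r)$ for the closed disk with center $a$ and radius $r$, I would show
\[
D(z,d(z,p))\subseteq D(x,d(x,p))\cup D(y,d(y,p)),
\]
after which the desired inclusion $C(z)\subseteq C(x)\cup C(y)$ follows at once from the two containments above.

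For this single-point inclusion, I would first record that each of the three disks $D(x,d(x,p))$, $D(z,d(z,p))$, $D(y,d(y,p))$ passes through $p$, because its radius equals the distance from its center to $p$; moreover their centers $x,z,y$ are collinear with $z$ in the segment $[x,y]$. Placing the line through the centers on the horizontal axis and expanding each defining inequality, a short computation shows that, for a test point $w$, membership in $D(x,d(x,p))$, $D(z,d(z,p))$ and $D(y,d(y,p))$ is governed respectively by $L\le 2x_1 t$, $L\le 2z_1 t$ and $L\le 2y_1 t$, where $x_1\le z_1\le y_1$ are the abscissae of the centers, $t$ is the signed horizontal offset of $w$ from $p$, and $L$ is a quantity depending only on $w$ and $p$. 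Assuming $w\in D(z,d(z,p))$ gives $L\le 2z_1 t$, and a case split on the sign of $t$ finishes: if $t\ge 0$ then $2y_1 t\ge 2z_1 t\ge L$ and $w\in D(y,d(y,p))$, while if $t\le 0$ then $2x_1 t\ge 2z_1 t\ge L$ and $w\in D(x,d(x,p))$.

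The conceptual heart of the argument is the reduction in the first paragraph: once the radius of $C(z)$ is attributed to a single farthest point $p$, the problem decouples into covering the two sides of $p$ by the two flanking disks, and the remaining inequalities are elementary. I expect the only point demanding care to be the bookkeeping in the second step, namely checking that the case split on $t$ lines up with the ordering $x_1\le z_1\le y_1$ of the centers (so that the left disk absorbs the side $t\le 0$ and the right disk the side $t\ge 0$) and confirming that the boundary cases $t=0$ and equalities among the centers cause no trouble. A coordinate-free alternative, phrasing membership through the radical axes of the pairs of disks, would avoid choosing coordinates but leads to the same case analysis.
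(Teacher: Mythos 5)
Your proof is correct, and it takes a genuinely different route from the paper's. The paper argues globally through the lens of the two given disks: it lets $a,b$ be the two intersection points of the boundaries of $C(x)$ and $C(y)$, takes the circle $C_r(z)$ centered at $z$ passing through $a$ and $b$, and checks that $C(x)\cap C(y)\subseteq C_r(z)\subseteq C(x)\cup C(y)$; since $P\subseteq C(x)\cap C(y)$, the disk $C_r(z)$ is a $P$-circle, so the minimality of $C(z)$ gives $C(z)\subseteq C_r(z)\subseteq C(x)\cup C(y)$. You instead localize to the single point $p\in P$ farthest from $z$ and prove the purely metric lemma that, for collinear centers with $z\in[x,y]$, the disk about $z$ through $p$ is covered by the disks about $x$ and $y$ through $p$; your algebra is right, since membership of a test point $w$ in the disk about a center of abscissa $a_1$ reads $L\le 2a_1t$, which is monotone in $a_1$ for each fixed sign of $t$, so validity at $z_1\in[x_1,y_1]$ forces validity at $x_1$ or at $y_1$. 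Comparing the two: the paper's argument is coordinate-free, one line long, and matches the figure, but it tacitly assumes the two boundary circles meet in two points; when one of $C(x),C(y)$ contains the other (which can genuinely happen, e.g.\ when $P$, $x$, $y$ are collinear and the circles are internally tangent), the points $a,b$ degenerate and that case needs a separate remark. Your argument is more elementary and self-contained, handles all degenerate configurations uniformly, and in fact establishes a stronger statement --- the proposition with $P$ replaced by an arbitrary single point --- at the price of a coordinate computation in place of the paper's geometric inclusion.
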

\begin{proof}
Let $a$ and $b$ be the two points of intersection between  $C(x)$ and $C(y)$. 
Let $r=d(a,z)=d(b,z)$ and let $C_r(z)$ be the circle with center on $z$ and radius $r$; see Figure~\ref{fig:CirculosEnSegmento}.

It is clear that $C(x)\cap C(y)\subseteq C_r(z)$ and since $P\subset C(x)\cap C(y)$,  we infer that $C_r(z)$ is a $P$-circle, therefore $C(z)\subseteq C_r(z)$. 
Furthermore, since $C_r(z)\subseteq C(x)\cup C(y)$ by construction, we conclude that $C(z)\subseteq C(x)\cup C(y)$.
\end{proof}

Observation~\ref{Relation P-circle separating circle} and Proposition~\ref{CirculosEnSegmento} imply that if $C(x)$ and $C(y)$ are both separating circles, then for every $z\in[x,y]$, $C(z)$ is also a separating circle. 
This comes from the fact that $C(z)$ is contained in $C(x)\cup C(y)$ and $Q$ lies outside of this union.
Furthermore, this implies that the minimum separating circle is unique.
Assume otherwise that $C$ and $C'$ are both minimum separating circles with centers $c$ and $c'$, respectively. Thus, for every point $z$ in the open segment $(c,c')$, $C(z)$ is also a separating circle contained in $C\cup C'$. This means that $C(z)$ has an smaller radius than $C$ and $C'$ which would be a contradiction.

\begin{figure}[h]
\begin{center}
\includegraphics[width=85mm]{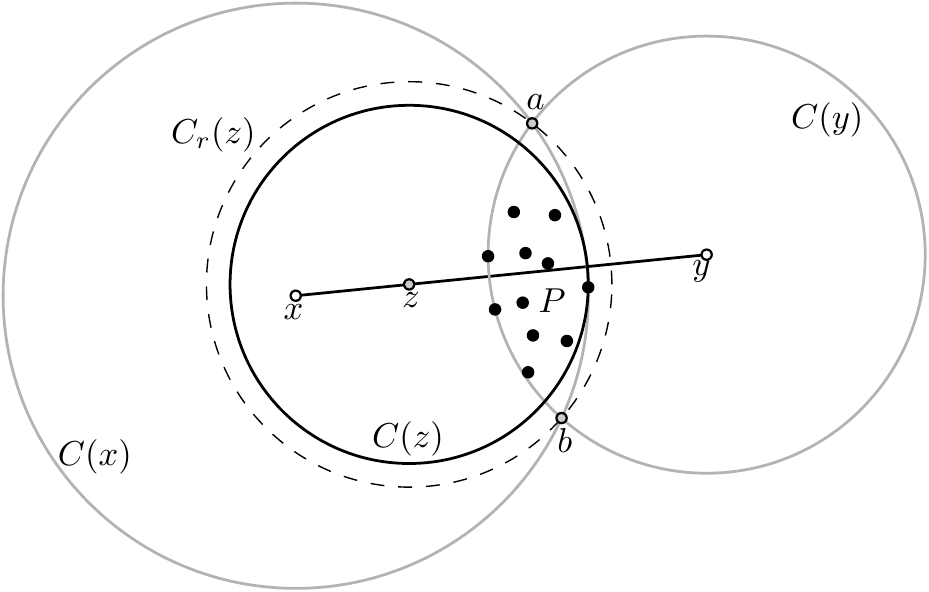}
\caption{\small For every point $z\in [x,y]$ the circle $C(z)$ lies in the union of $C(x)$ and $C(y)$.} 
\label{fig:CirculosEnSegmento}
\end{center}
\end{figure}

The following is also demonstrated in~\cite{DynamicCircleSeparability}, with a reworded proof.
\begin{lemma}\label{Common Ancestor}
Let $x$ and $y$ be two separating points on $\mathcal{V}(P)$.
If $z$ is the lowest common ancestor of $x$ and $y$ in the rooted tree $\mathcal{V}(P)$, then $C(z)$ is a separating circle; moreover $\rho(z) \leq \min\{\rho(x), \rho(y)\}$.
\end{lemma}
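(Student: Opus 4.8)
The plan is to treat the two assertions separately, dispatching the radius inequality directly from monotonicity and reducing the separation claim to a containment of disks. For the inequality, note that $z$, being the lowest common ancestor of $x$ and $y$, is an ancestor of each of them, so $z$ lies on both $\pi_x$ and $\pi_y$. Proposition~\ref{Rho is monotonic} states that $\rho$ increases monotonically along $\pi_x$ starting at $\cp$; since $z$ precedes $x$ on this path we get $\rho(z)\le\rho(x)$, and symmetrically $\rho(z)\le\rho(y)$. Hence $\rho(z)\le\min\{\rho(x),\rho(y)\}$, which is the second part of the statement.

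For the first part I would prove the stronger, purely geometric containment $C(z)\subseteq C(x)\cup C(y)$, from which separation is immediate: since $C(x)$ and $C(y)$ are separating, $Q$ meets the interior of neither, so $Q$ is disjoint from the interior of $C(x)\cup C(y)$ and therefore from the interior of $C(z)$, making $C(z)$ a separating circle. This mirrors the way Observation~\ref{Relation P-circle separating circle} and Proposition~\ref{CirculosEnSegmento} were combined earlier to propagate separation along a segment, the difference being that $z$ is not a convex combination of $x$ and $y$ but their meeting point in the tree, so Proposition~\ref{CirculosEnSegmento} does not apply verbatim.

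To establish $C(z)\subseteq C(x)\cup C(y)$ I would exploit the fact that along any single edge of $\mathcal{V}(P)$, the bisector of a pair $p,p'$ of $P$, every circle $C(\cdot)$ passes through both $p$ and $p'$; consequently the circles centred at two adjacent vertices of $\mathcal{V}(P)$ meet in exactly the two sites shared by their common edge. In the base case where $z$ is adjacent to both $x$ and $y$, the circles $C(z)$ and $C(x)$ share the chord determined by their common edge, so $C(z)\setminus C(x)$ is a single lune; because $z$ is the lowest common ancestor, this lune faces the branch toward $y$, and one verifies it is contained in $C(y)$ (and symmetrically), yielding the containment. For the general case I would induct on the length of the tree path from $x$ to $y$, peeling off one edge at a time toward $z$, with Proposition~\ref{Rho is monotonic} keeping the radii ordered so that the relevant lunes stay on the correct sides.

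The main obstacle I anticipate is precisely this containment. The disks $C(\cdot)$ along a path of $\mathcal{V}(P)$ are \emph{not} nested, since two distinct circles through a common pair of sites always cross, so one cannot simply telescope inclusions; instead the position of $z$ relative to $C(x)$ and $C(y)$ must be controlled through the lowest-common-ancestor property rather than by straight-line interpolation. Making the lune argument and its inductive composition precise, in particular verifying that the uncovered part of $C(z)$ always faces the opposite branch, is where the real work lies.
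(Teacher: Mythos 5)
Your treatment of the radius inequality is correct and matches the paper: $z$ lies on both $\pi_x$ and $\pi_y$, so Proposition~\ref{Rho is monotonic} gives $\rho(z)\le\min\{\rho(x),\rho(y)\}$. Your reduction of the separation claim to the containment $C(z)\subseteq C(x)\cup C(y)$ is also sound, and that containment is in fact true (the paper's own argument yields it as a byproduct). The genuine gap is that you never prove this containment: the base case rests on the assertion that the lune $C(z)\setminus C(x)$ ``faces the branch toward $y$'' and that ``one verifies it is contained in $C(y)$,'' which is precisely the entire geometric content of the lemma, and you concede as much in your closing paragraph. Moreover, the proposed induction does not close as described: if you peel one edge off the $x$-side, replacing $x$ by the next vertex $x_1$ on the path toward $z$, the inductive hypothesis gives $C(z)\subseteq C(x_1)\cup C(y)$, and to conclude you would still have to show $C(z)\cap\bigl(C(x_1)\setminus C(x)\bigr)\subseteq C(y)$ --- a three-circle lune containment of the same difficulty as the base case, which the hypothesis does not supply. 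Note also that the lune $C(z)\setminus C(x)$ does not really ``face the branch toward $y$'': it is the part of $C(z)$ beyond the chord joining the two sites defining $x$'s edge, so it faces away from $x$, i.e., toward the root and toward every other branch at once; some further argument must single out why $C(y)$ in particular covers it.

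The missing idea, which is how the paper proceeds, is to convert the tree configuration into a segment configuration so that Proposition~\ref{CirculosEnSegmento} applies after all. Let $e_x$ and $e_y$ be the edges at $z$ toward $x$ and $y$, and let $R(p)$ be a Voronoi region wedged at $z$ between them and incident to $e_x$. The line $\ell_{z,p}$ through $z$ and $p$ meets the boundary of $R(p)$ only at $z$ and separates $x$ from $y$, so the segment $[x,y]$ crosses it at a point $z'\in R(p)$. Then $C(z')\subseteq C(x)\cup C(y)$ by Proposition~\ref{CirculosEnSegmento} (so $C(z')$ is a separating circle), while $C(z)\subseteq C(z')$ because both circles pass through $p$ and $C(z')$ is obtained by expanding $C(z)$ while anchored at $p$; Observation~\ref{Relation P-circle separating circle} then makes $C(z)$ a separating circle. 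Without this step, or an actual proof of your lune claims replacing it, your proposal is a correct outline whose central step is missing.
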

\begin{proof}
Since $z$ is a common ancestor of $x$ and $y$, we infer from Proposition~\ref{Rho is monotonic} that $\rho(z)\leq \min\{\rho(x), \rho(y)\}$.   
What remains is to prove that $C(z)$ is a separating circle.
Suppose that $y\notin \pi_x$ and $x\notin \pi_y$, otherwise the result follows 
trivially since $y$ or $x$ will coincide with $z$.
So, let $e_x$ ({\em resp.} $e_y$) be the edge incident to $z$, on the path joining $z$ with $x$ ({\em resp.} $y$).
Consider the radial order of the edges incident to $z$ lying between $e_x$ and $e_y$. 
Let $e'$ be the edge consecutive to $e_x$ in this order.
Therefore, $e_x,e'$ and $z$ all belong to the boundary of some Voronoi region $R(p)$; see Figure~\ref{fig:CommonAncestor}.

Let $\ell_{z,p}$ be the line through $z$ and $p$. 
By the definition of  $\mathcal{V}(P)$,  $\ell_{z,p}$ intersects the boundary of $R(p)$ only at the point $z$, and it
separates $x$ and $y$.  
Thus the intersection point $z'$ between $[x,y]$ and $\ell_{z,p}$ belongs to $R(p)$.
Recall that Proposition~\ref{CirculosEnSegmento} implies that $C(z')$ is a separating circle.
Trivially, $C(z)\subseteq C(z')$ since the latter is obtained by expanding $C(z)$ while anchoring it to $p$.
Since $C(z)$ is a $P$-circle contained inside a separating circle, by
Observation~\ref{Relation P-circle separating circle} $C(z)$ is also a separating circle. 
\end{proof}

\begin{figure}[h]
\begin{center}
\includegraphics[width=90mm]{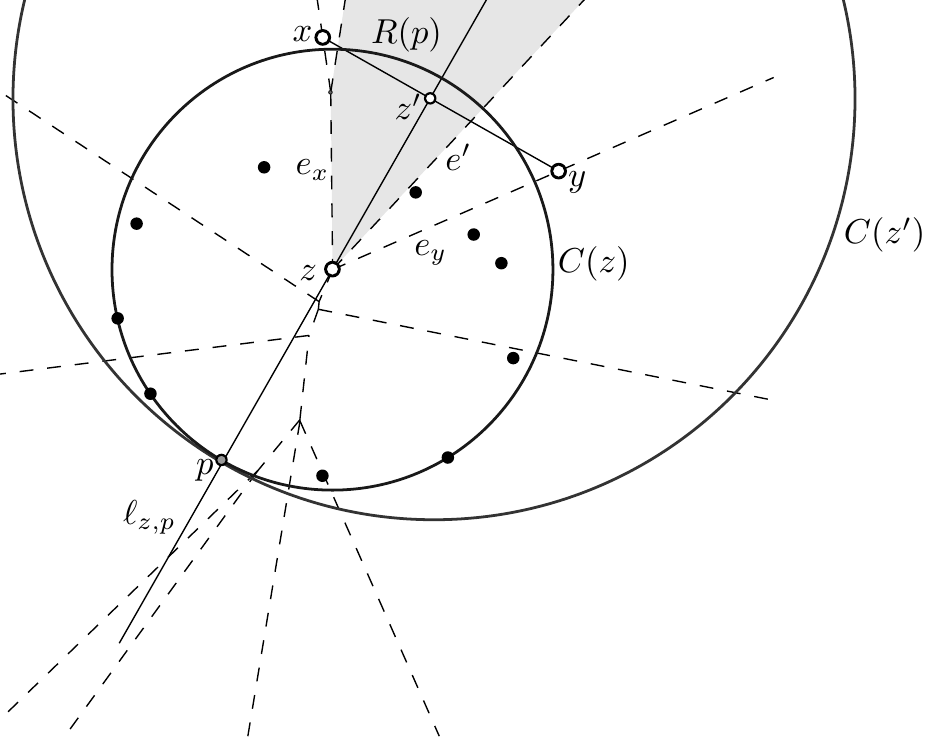}
\caption{\small Illustration for  Lemma~\ref{Common Ancestor}.} 
\label{fig:CommonAncestor}
\end{center}
\end{figure}

The following is also proved in~\cite{DynamicCircleSeparability}.
\begin{theorem}\label{sep in pi_s}
Let $s$ be a point on $\mathcal{V}(P)$. If $s$ is a separating point, 
then $\sep$  belongs to $\pi_s$.
\end{theorem}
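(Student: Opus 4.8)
The plan is to combine the Common Ancestor Lemma (Lemma~\ref{Common Ancestor}) with the minimality and uniqueness of $\Sep$, forcing the lowest common ancestor of $s$ and $\sep$ to coincide with $\sep$ itself.

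First I would record two facts about $\sep$. Since $C(\sep)$ is the minimum $P$-circle centered at $\sep$, it is contained in $\Sep$, and hence by Observation~\ref{Relation P-circle separating circle} it is a separating circle; by the minimality of $\Sep$ we obtain $C(\sep)=\Sep$, so $\sep$ is itself a separating point lying on $\mathcal{V}(P)$. Consequently $\rho(\sep)$ is the smallest radius attained by any separating circle, i.e. $\rho(\sep)\le\rho(y)$ for every separating point $y$ on $\mathcal{V}(P)$.

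The core step is to let $z$ be the lowest common ancestor of $\sep$ and $s$ in the rooted tree $\mathcal{V}(P)$, and apply Lemma~\ref{Common Ancestor} to the two separating points $\sep$ and $s$. This immediately gives that $C(z)$ is a separating circle with $\rho(z)\le\min\{\rho(\sep),\rho(s)\}\le\rho(\sep)$. On the other hand, $z$ is a separating point, so the minimality recorded above yields $\rho(\sep)\le\rho(z)$. Hence $\rho(z)=\rho(\sep)$, and $C(z)$ is itself a minimum separating circle.

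Finally I would invoke the uniqueness of the minimum separating circle (established right after Proposition~\ref{CirculosEnSegmento}) to conclude $C(z)=\Sep$, and therefore $z=\sep$. Since $z$ is by definition an ancestor of $s$, this means $\sep$ is an ancestor of $s$, so $\sep$ lies on the path $\pi_s$ joining $\cp$ to $s$, as claimed. I expect the proof to be short, since Lemma~\ref{Common Ancestor} already does the geometric heavy lifting; the one point that needs care is the passage from $\rho(z)=\rho(\sep)$ to $z=\sep$, which relies essentially on uniqueness rather than on the radius equality alone. The degenerate configurations where $s=\sep$, or where one of $s$ and $\sep$ is an ancestor of the other, are handled uniformly by the same argument, since in every case it forces $z=\sep$.
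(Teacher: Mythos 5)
Your proof is correct and follows essentially the same route as the paper: both hinge on applying Lemma~\ref{Common Ancestor} to the lowest common ancestor $z$ of $s$ and $\sep$ (noting first that $\sep$ is itself a separating point on $\mathcal{V}(P)$). The only difference is the closing step --- the paper argues by contradiction, using the strict monotonicity of $\rho$ (Proposition~\ref{Rho is monotonic}) to get $\rho(z)<\rho(\sep)$ when $z\neq\sep$, whereas you conclude directly from $\rho(z)=\rho(\sep)$ via the uniqueness of the minimum separating circle; both finishes are valid, since both supporting facts are established earlier in the paper.
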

\begin{proof}
Proceed by contradiction and assume that $\sep$ does not belong to $\pi_s$.
Let $z$ be the lowest common ancestor of $\sep$ and $s$. Note that $z$ lies in $\pi_s$ and therefore $z\neq \sep$.
In this case, Lemma~\ref{Common Ancestor}, in conjunction with Proposition~\ref{Rho is monotonic}, imply that $C(z)$ is a separating circle with $\rho(z)< \rho(\sep)$ which is a contradiction.
\end{proof}

Given a separating point $s$,
we claim that if we move a point $y$ continuously from $s$ towards $\cp$ on $\pi_s$, 
then $C(y)$ will shrink and approach $Q$, 
becoming tangent to it for the first time when $y$ reaches $\sep$. To prove this claim in Lemma~\ref{lemma:TangentToQ}, we introduce the following notation. 

Let $x$ be a point lying on an edge $e$ of $\mathcal{V}(P)$ such that $e$ lies on the bisector of $p,p'\in P$. Let $C^-(x)$ and $C^+(x)$ be the two closed convex regions obtained by splitting the disk $C(x)$ with the segment  $[p,p']$. Assume that $x$ is contained in $C^-(x)$.

\begin{observation}\label{obs:PushingCircles}
Let $x,y$ be two points lying on an edge $e$ of $\mathcal{V}(P)$.
If $\rho(x) > \rho(y)$, then $C^+(x)\subset C^+(y)$ and $C^-(y)\subset C^-(x)$.
\end{observation}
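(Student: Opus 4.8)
The plan is to exploit the fact that, as $x$ ranges over the edge $e$, every circle $C(x)$ passes through the two \emph{fixed} points $p$ and $p'$. Indeed, since $x$ lies on the bisector of $p,p'$ we have $\rho(x)=d(x,p)=d(x,p')$, so both $p$ and $p'$ lie on the boundary of $C(x)$. Hence the chord $[p,p']$ is the same for all these circles, and the family $\{C(x):x\in e\}$ is a pencil of circles through $p$ and $p'$ whose centers all lie on the line containing $e$ (the perpendicular bisector of $p,p'$). I will write $\ell$ for the line through $p$ and $p'$, so that $C^-(\cdot)$ is the closed half-disk on the side of $\ell$ containing the center and $C^+(\cdot)$ is the other half-disk; splitting by the segment $[p,p']$ agrees with splitting by $\ell$, since $[p,p']$ is exactly the part of $\ell$ inside each disk.

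First I would record that two distinct circles of this pencil meet \emph{only} at $p$ and $p'$. Consequently, on each of the two sides of $\ell$, the bounding arc of $C(x)$ and that of $C(y)$ cannot cross: they share only the endpoints $p,p'$, so by a Jordan-curve argument one half-disk lies entirely inside the other on that side. Thus on each side of $\ell$ the two half-disks are nested, and it remains only to decide, on each side, which one contains which.

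Next I would fix the direction of the nesting by comparing the extreme point of each circle in the direction perpendicular to $\ell$. Let $h(x)$ be the distance from the center $x$ to $\ell$; since the circle crosses $\ell$ at the two distinct points $p,p'$ we have $\rho(x)>h(x)$, and writing $2a=d(p,p')$ gives $h(x)=\sqrt{\rho(x)^2-a^2}$. The point of $C(x)$ farthest from $\ell$ on the center side lies at distance $\rho(x)+\sqrt{\rho(x)^2-a^2}$, which is increasing in $\rho(x)$, whereas the point farthest from $\ell$ on the opposite side lies at distance $\rho(x)-\sqrt{\rho(x)^2-a^2}$, which is decreasing in $\rho(x)$. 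Therefore, when $\rho(x)>\rho(y)$, the bounding arc of $C^-(x)$ reaches farther from $\ell$ than that of $C^-(y)$, while the bounding arc of $C^+(x)$ stays closer to $\ell$ than that of $C^+(y)$. Combined with the nesting established above, this forces $C^-(y)\subset C^-(x)$ and $C^+(x)\subset C^+(y)$, as claimed.

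The one delicate point, which I would be most careful about, is the consistency of the labels $C^-$ and $C^+$: they are defined relative to the side of $\ell$ on which the center sits, so the argument needs both centers on the same side of $\ell$. This holds because the perpendicular bisector of $p,p'$ meets $\ell$ only at the midpoint of $[p,p']$, so $e$ lies (apart from possibly that single midpoint) entirely on one side of $\ell$, and both $x$ and $y$ inherit the same center side. Granting this, the monotonicity above is routine and the two containments follow. Equivalently, one may place $p=(-a,0)$ and $p'=(a,0)$ with $\ell$ the horizontal axis and the centers at height $s$ on the vertical axis, and check directly that the lower boundary $u\mapsto s-\sqrt{a^2+s^2-u^2}$ of $C^+$ and the upper boundary $u\mapsto s+\sqrt{a^2+s^2-u^2}$ of $C^-$ are both increasing in $s$, which yields the same conclusion.
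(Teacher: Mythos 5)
The paper never actually proves this observation: it is presented as self-evident, supported only by Figure~\ref{fig:PushingCircles}, so your argument is not so much a different route as the only written-out proof on the table. Its core is sound. Since every point of $e$ is equidistant from $p$ and $p'$, the disks $C(x)$, $x\in e$, form a pencil of circles through the two fixed points $p,p'$; two distinct members meet only at $p$ and $p'$, so on each side of the line $\ell$ through $p,p'$ the two bounding arcs cannot cross and the half-disks are nested, and your computation of the extreme distances $\rho\pm\sqrt{\rho^2-a^2}$ (increasing and decreasing in $\rho$, respectively) correctly identifies which one lies inside which. This is a clean rigorization of what the paper's figure merely displays.

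The one weak point is exactly the step you yourself singled out as delicate, and your justification of it is a non sequitur. From the fact that the bisector meets $\ell$ only at the midpoint $m$ of $[p,p']$ it does not follow that $e$ lies on one side of $\ell$: a subsegment of the bisector can contain $m$ in its relative interior and straddle $\ell$, and if $x$ and $y$ sat on opposite sides then, with the paper's per-point labelling (each center in its own $C^-$), the statement would actually be false --- a half-disk strictly below $\ell$ is never contained in one above it. What rules this out is a fact specific to the paper's setup, not to the pencil geometry: if $m$ lay on the edge between $R(p)$ and $R(p')$, then every point of $P$ would be within distance $d(p,p')/2$ of $m$, so the circle with diameter $[p,p']$ would be the minimum enclosing circle of $P$ and $m=\cp$; and the paper explicitly inserts $\cp$ as a vertex of $\mathcal{V}(P)$, splitting the edge containing it. Hence in the modified diagram no edge contains $m$ in its relative interior, both centers lie on the same closed side of $\ell$, and with this one-line repair your proof is complete.
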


\begin{figure}[h!]
\begin{center}
\includegraphics[width=75mm]{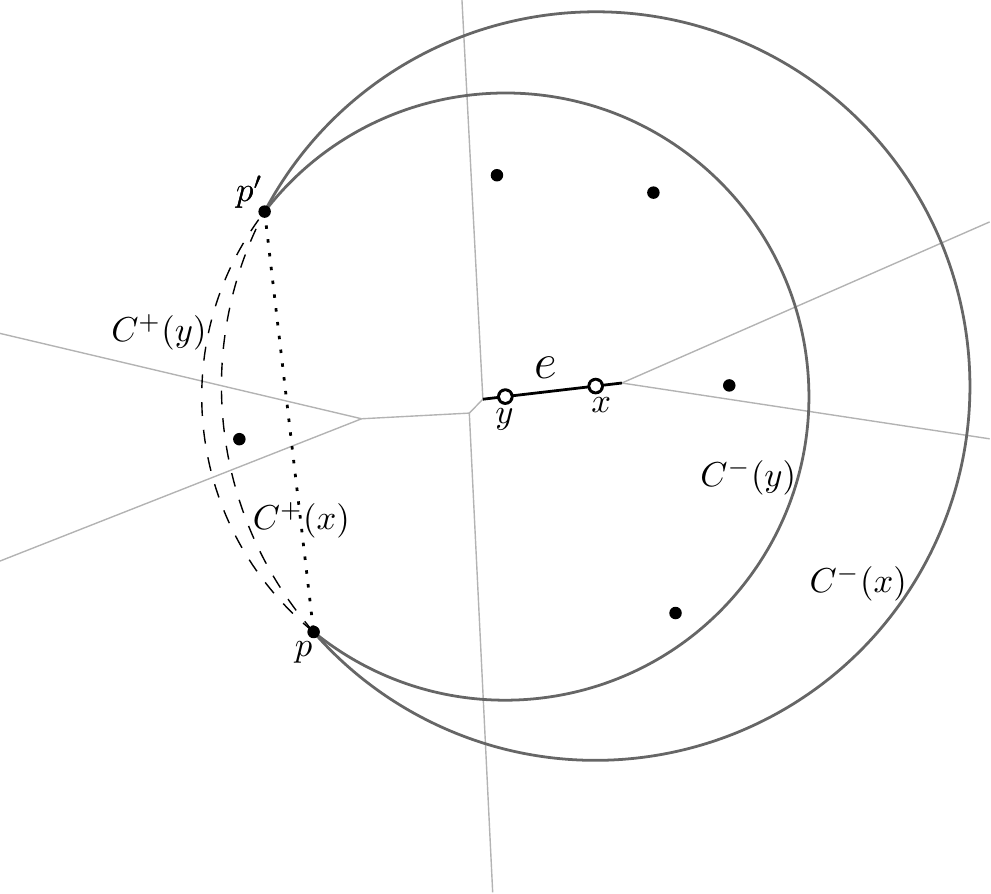}
\caption{\small The result of Observation~\ref{obs:PushingCircles} when $\rho(x) > \rho(y)$.} 
\label{fig:PushingCircles}
\end{center}
\end{figure}

Using this simple observation that can be seen in Figure~\ref{fig:PushingCircles}, we obtain the following generalization.

\begin{proposition}\label{prop:PushingCricles}
Let $s$ be a point on $\mathcal{V}(P)$ and let $x$ and $y$ be two points  on $\pi_s$. If $\rho(x) > \rho(y)$, then $C^+(x)\subset C^+(y)$ and $C^-(y)\subset C^-(x)$.
\end{proposition}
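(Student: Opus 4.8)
The plan is to reduce the statement to Observation~\ref{obs:PushingCircles} by walking along the portion of $\pi_s$ joining $x$ and $y$ and chaining the inclusions edge by edge. First I would invoke Proposition~\ref{Rho is monotonic}: since $x$ and $y$ both lie on the single root-to-$s$ path $\pi_s$, they are comparable in the tree, and $\rho(x)>\rho(y)$ forces $y$ to lie between $\cp$ and $x$, i.e. on $\pi_x$. List the vertices of $\mathcal{V}(P)$ traversed by the subpath from $y$ to $x$ as $u_1,\dots,u_{k-1}$ and set $u_0=y$, $u_k=x$, so that each segment $[u_i,u_{i+1}]$ lies on a single edge $e_i$ and $\rho$ is strictly increasing along the list. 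Within each $e_i$, Observation~\ref{obs:PushingCircles} applied to $u_i$ and $u_{i+1}$ already gives $C^+(u_{i+1})\subseteq C^+(u_i)$ and $C^-(u_i)\subseteq C^-(u_{i+1})$, where both regions are cut by the bisector carrying $e_i$.

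The only gap is at a shared interior vertex $v=u_i$: there the region $C^+(v)$ computed for the incoming edge need not equal the one computed for the outgoing edge, since they are cut from the same disk $C(v)$ by two different chords. So the crux is a vertex-crossing claim: if $e_{\mathrm{prev}}$ (closer to $\cp$) and $e_{\mathrm{next}}$ are the two edges of $\pi_s$ incident to $v$, then the far cap computed for $e_{\mathrm{next}}$ is contained in the far cap computed for $e_{\mathrm{prev}}$ (and dually for the center caps). To prove this I would use that every such interior $v$ is distinct from $\cp$. Writing $p_1,p_2,p_3$ for the three points of $P$ on $\partial C(v)$, they are at the common distance $\rho(v)$ from $v$, so $v$ is the circumcenter of the triangle $p_1p_2p_3$; moreover $\rho$ equals $\max_i d(\cdot,p_i)$ in a neighborhood of $v$, and since the unique local minimum of $\rho$ is the root $\cp\neq v$, the circumcenter $v$ cannot be the minimizer of $\max_i d(\cdot,p_i)$. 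Hence $v$ lies outside the triangle, the triangle is obtuse, and $v$ sits on the opposite side of the longest edge from the opposite vertex.

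Next I would identify the parent edge. Moving from $v$ toward the midpoint of the longest edge is precisely the direction in which $\max_i d(\cdot,p_i)$, and therefore $\rho$, decreases, which characterises the step toward $\cp$; so $e_{\mathrm{prev}}$ is the bisector of that longest edge. Thus its chord is the side opposite the obtuse vertex, say $[p_1,p_2]$, and the obtuse vertex $p_3$ lies on the opposite side of $[p_1,p_2]$ from $v$, i.e. on the arc bounding the far cap $C^+(v)$ for $e_{\mathrm{prev}}$. Since $e_{\mathrm{next}}$ is the bisector of $[p_1,p_3]$ or of $[p_2,p_3]$, its far arc runs from $p_3$ to $p_1$ (or $p_2$) and is a sub-arc of the far arc $p_1\,p_3\,p_2$ of $e_{\mathrm{prev}}$. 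The caps are therefore nested, giving the inclusion of the far cap for $e_{\mathrm{next}}$ in the far cap for $e_{\mathrm{prev}}$, and taking complements inside $C(v)$ yields the reverse inclusion for the center caps.

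Finally I would assemble the chain: combining the single-edge inclusions of Observation~\ref{obs:PushingCircles} with the vertex-crossing inclusions at each of $u_1,\dots,u_{k-1}$ gives $C^+(x)=C^+(u_k)\subseteq\cdots\subseteq C^+(u_0)=C^+(y)$ and, symmetrically, $C^-(y)\subseteq\cdots\subseteq C^-(x)$, which is the assertion. The main obstacle is the vertex-crossing claim, and inside it the geometric fact that every non-root vertex of $\mathcal{V}(P)$ carries an obtuse farthest-point triangle whose circumcenter lies beyond the side bisected by the parent edge; once that is in hand, the nesting of the caps and the chaining are routine.
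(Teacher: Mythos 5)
Your proof is correct, and its skeleton is exactly the paper's: order $x$ and $y$ along $\pi_s$ via Proposition~\ref{Rho is monotonic}, break the subpath joining them at the intermediate vertices, apply Observation~\ref{obs:PushingCircles} on each edge, and chain the inclusions. The difference is what happens at the intermediate vertices. The paper's proof simply writes the chain $C^+(x)\subset C^+(v_0)\subset \cdots \subset C^+(v_k)\subset C^+(y)$ and stops, silently identifying, at each vertex $v_i$, two caps that are cut from the same disk $C(v_i)$ by \emph{different} chords (the bisected pair of points changes from the incoming edge to the outgoing edge). You spotted precisely this issue and supplied the missing vertex-crossing lemma: the far cap for the child edge is contained in the far cap for the parent edge. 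Your proof of it is sound: since an interior vertex $v$ of the subpath is not $\cp$, and $\rho$ is convex and agrees with $\max_i d(\cdot,p_i)$ near $v$, the circumcenter $v$ cannot minimize that max, so the defining triangle is obtuse and $v$ lies beyond its longest side; the parent edge is the bisector of that longest side, being the unique incident edge along which $\rho$ decreases; and the far arc of the child chord is then a sub-arc of the far arc of the parent chord, so the caps (each the convex hull of its arc) nest as claimed. In short, your write-up is more complete than the published proof rather than different in strategy; the only caveat is that your vertex analysis assumes each Voronoi vertex is defined by exactly three points of $P$ (generic position), an assumption the paper also makes implicitly.
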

\begin{proof}
Note that if $x$ and $y$ lie on the same edge, then the result holds by Observation~\ref{obs:PushingCircles}. 
If they are on different edges, we consider the path $\Phi = (x,  v_0, \ldots,  v_k, y)$ contained in $\pi_s$ joining $x$ and $y$, such that $v_i$ is a vertex of $\mathcal{V}(P)$, $i\in \{0, \ldots, k\}$. 
Thus, Observation~\ref{obs:PushingCircles} and Proposition~\ref{Rho is monotonic} imply that $C^+(x)\subset C^+(v_0)\subset \ldots \subset C^+(v_k)\subset C^+(y)$ and that $C^-(y)\subset C^-(v_k)\subset \ldots \subset C^-(v_0)\subset C^-(x)$.
\end{proof}

Note that $\Sep = C(\sep)$ must intersect the boundary of $Q$. Otherwise, we could always push $\sep$ closer to the root on $\mathcal{V}(P)$, while keeping it as a separating point. 
Furthermore, since $Q$ is convex and $\Sep$ contains no point of $Q$ in its interior, 
the intersection consists on only one point.
From now on we refer to $\tang$ as the tangency point between $\Sep$ and $Q$.

We claim that $\tang$ lies on the boundary of $C^+(\sep)$. 
Assume to the opposite that $\tang$ lies on $C^-(\sep)$.  
Let $\varepsilon>0$ and let $c_\varepsilon$ be the point obtained by moving $\sep$ a $\varepsilon$ distance towards to $\cp$ on $\mathcal{V}(P)$. 
Note that by Proposition~\ref{Rho is monotonic}, $\rho(c_\varepsilon) < \rho(\sep)$. 
In addition, Proposition~\ref{prop:PushingCricles} implies that $C^-(c_\varepsilon)\subset C^-(\sep)$. 
Since we assumed that $\tang$ lies on the boundary of $C^-(\sep)$, we conclude that $\tang$ does not belong to $C(c_\varepsilon)$.
This implies that, for $\varepsilon$ sufficiently small, $C(c_\varepsilon)$ is a separating circle which is a contradiction to the minimality of $\Sep$.

\begin{lemma}\label{lemma:TangentToQ}
Let $s$ be a separating point. 
If $x$ is a point lying on $\pi_s$, then $C(x)$ is a separating circle if and only if $\rho(x) \geq \rho(\sep)$.
Moreover, $\Sep$ is the only separating circle that intersects $Q$.
\end{lemma}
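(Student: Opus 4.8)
The plan is to treat the two implications of the ``if and only if'' separately, and then deduce the ``moreover'' clause from a strengthening of the easier direction.

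The forward implication is immediate from minimality: if $x\in\pi_s$ and $C(x)$ is a separating circle, then $C(x)$ is one of the separating circles over which $\Sep$ is the minimum, so $\rho(x)\geq\rho(\sep)$. (Here I use that $\Sep$ is well defined and unique, as established after Proposition~\ref{CirculosEnSegmento}.) For the reverse implication, suppose $\rho(x)\geq\rho(\sep)$. Since $s$ is a separating point, Theorem~\ref{sep in pi_s} places $\sep$ on $\pi_s$, and because $x$ lies on $\pi_s$ as well while $\rho$ is strictly increasing from $\cp$ along $\pi_s$ (Proposition~\ref{Rho is monotonic}), the three points occur in the order $\sep,x,s$ along the path, with $\rho(\sep)\leq\rho(x)\leq\rho(s)$. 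The cases $x=\sep$ and $x=s$ are trivial, so assume strict inequalities.

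The idea is to sandwich $C(x)$ between the two circles already known to be separating, namely $\Sep=C(\sep)$ and $C(s)$. Applying Proposition~\ref{prop:PushingCricles} to the pair $\sep,x$ gives $C^+(x)\subseteq C^+(\sep)\subseteq\Sep$, while applying it to the pair $x,s$ gives $C^-(x)\subseteq C^-(s)\subseteq C(s)$. Since $C(x)=C^+(x)\cup C^-(x)$, I obtain $C(x)\subseteq \Sep\cup C(s)$; as the interiors of both $\Sep$ and $C(s)$ are disjoint from $Q$, this containment should force the interior of $C(x)$ to be disjoint from $Q$, and since $C(x)$ is a $P$-circle it is then separating. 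For the ``moreover'' clause I would sharpen the far-cap half of this argument using the tangency point. Recall from the paragraph preceding the lemma that $Q$ meets $\Sep$ only at $\tang$, and that $\tang$ lies on the boundary of the far cap $C^+(\sep)$. When $x\neq\sep$ the containment $C^+(x)\subsetneq C^+(\sep)$ is strict and the far cap retreats from $\tang$; a short computation comparing $d(\tang,x)$ with $\rho(x)$, using that $\tang$ lies strictly on the far side of the chord defining $\sep$'s edge, then yields $d(\tang,x)>\rho(x)$, i.e.\ $\tang\notin C(x)$. Hence every separating circle $C(x)$ with $x\neq\sep$ avoids the unique point at which $Q$ could meet $\Sep$, and together with the uniqueness of $\sep$ this shows that $\Sep$ is the only separating circle meeting $Q$.

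The step I expect to be the main obstacle is making the containment $C(x)\subseteq\Sep\cup C(s)$ \emph{actually} certify that the open disk bounded by $C(x)$ misses $Q$. Containment in a union only controls $Q$ up to the boundaries $\partial\Sep$ and $\partial C(s)$: a priori a point of $Q$ could lie on one of these boundaries yet strictly inside $C(x)$. A point of $Q$ on $\partial\Sep$ can only be $\tang$, which the far-cap inequality $d(\tang,x)\geq\rho(x)$ excludes from the interior of $C(x)$; a point of $Q$ on $\partial C(s)$ must be ruled out by upgrading $C^-(x)\subseteq C^-(s)$ to the open statement $\operatorname{int}C^-(x)\subseteq\operatorname{int}C(s)$, i.e.\ that the near arc of $C(x)$ lies strictly inside $C(s)$ except at the two points it shares with $\sep$'s chord. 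Pinning down these boundary incidences, rather than the cap containments themselves, is the delicate part; the cap containments are exactly what Observation~\ref{obs:PushingCircles} and Proposition~\ref{prop:PushingCricles} deliver.
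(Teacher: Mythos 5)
Your proposal is, at its core, the paper's own argument. The sandwich you construct --- $C^+(x)\subseteq C^+(\sep)\subseteq \Sep$ and $C^-(x)\subseteq C^-(s)\subseteq C(s)$ via Proposition~\ref{prop:PushingCricles} --- is exactly how the paper proves the reverse implication; the paper phrases it as a case analysis (a hypothetical $q\in Q\cap C(x_2)$ lies either in $C^-(x_2)\subset C^-(s)$ or in $C^+(x_2)\subset C^+(\sep)$) rather than as containment in a union, but the content is identical, as is the use of $\tang\in\partial C^+(\sep)$. Your forward implication by minimality of $\Sep$ is simpler than the paper's, which instead proves the contrapositive (if $\rho(x)<\rho(\sep)$ then $C^+(\sep)\subset C^+(x)$ forces $\tang$ into the interior of $C(x)$); both are correct.

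Two remarks on the parts you leave open. First, the ``delicate part'' you flag is real but is precisely what the strict containments of Observation~\ref{obs:PushingCircles} deliver: the two bounding arcs of the nested caps meet only at the chord endpoints, which are points of $P$, and a point of $Q$ cannot lie on a chord joining two points of $P$ (the interiors of $Q$ and $\hull{P}$ are disjoint, and the paper implicitly assumes degenerate tangencies away); hence any point of $Q$ in the smaller cap is automatically interior to the larger disk. The paper invokes exactly this reading without comment (``$q$ belongs to the interior of $\Sep$''), so here you are only being more scrupulous than the source. Second, your derivation of the ``moreover'' clause has a genuine logical gap as written: establishing $\tang\notin C(x)$ only excludes contact of $C(x)$ with $Q$ on the far side, because $C(x)\cap Q\subseteq(\partial\Sep\cup\partial C(s))\cap Q$, and nothing available before the lemma rules out a separating $C(x)$ touching $Q$ at a point of $\partial C(s)$ --- a priori $C(s)$ itself might be tangent to $Q$ (indeed that is what happens when $s=\sep$). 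So ``avoids the unique point where $Q$ meets $\Sep$'' does not by itself yield ``avoids $Q$''; you must also run the near-cap half, i.e.\ the same interior upgrade of $C^-(x)\subseteq C^-(s)$ you describe, to kill near-side contact. The paper sidesteps this detour entirely: its case analysis proves in one stroke that $C(x_2)$ contains \emph{no} point of $Q$, closed disk included, and the ``moreover'' clause falls out with no separate comparison of $d(\tang,x)$ against $\rho(x)$.
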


\begin{proof}
Let $s$ be a separating point. We know by Theorem~\ref{sep in pi_s} that $\sep$ belongs to $\pi_s$.
Let $x_1$ and $x_2$ be two points on $\pi_s$ such that  $\rho(x_1) < \rho(\sep)$ and  $\rho(\sep) < \rho(x_2)$.
Proposition~\ref{prop:PushingCricles} implies that $C^+(\sep)\subset C^+(x_1)$ and since $\tang$ belongs to the boundary of $C^+(\sep)$, 
we conclude $C(x_1)$ contains $\tang$ in its interior. Therefore $C(x_1)$ is not a separating circle.

On the other hand, $C(x_2)$ contains no point of $Q$. Otherwise, let $q\in Q$ be a point lying in $C(x_2)$. 
Two cases arise:
Either $q$ belongs to $C^-(x_2)$ or $q$ belongs to $C^+(x_2)$. In the former case, since $\rho(s) > \rho(x_2)$, $q\in C^-(x_2)\subset C^-(s)$--- a contradiction since $C(s)$ is a separating circle. 
In the latter case, since $\rho(x_2) > \rho(\sep)$, Proposition~\ref{prop:PushingCricles} would imply that $q$ belongs to the interior of $\Sep$ which would also be a contradiction.
\end{proof}

\section{The algorithm}
The basis of our algorithm is to find a separating point $s$ and from there, perform a binary search on $\pi_s$ to find a separating circle tangent to $Q$ with center on this path.  
This is a rather intuitive solution, which was also used in~\cite{DynamicCircleSeparability}.  
However we use some additional geometric properties to reduce the time complexity.

\subsection{Preprocessing}
We first compute $\mathcal{V}(P)$ and $\cp$ in 
$O(n \log n)$ time~\cite{shamos}.
$\mathcal{V}(P)$ can be stored as a binary tree  with $n$ (unbounded) leaves, 
so that every edge and every vertex of the tree has a set of pointers to the vertices of $P$ defining it.
Every Voronoi region is stored as a convex polygon and
 every vertex $p$ of $P$ has a pointer to $R(p)$.
If $\cp$ is not a vertex of $\mathcal{V}(P)$, 
we split the edge that it belongs to.

We want our data structure to support binary search queries on any possible path $\pi_s$ of $\mathcal{V}(P)$.
Thus, to guide the binary search we would like to have an oracle that answers queries of the following form:
Given a vertex $v$ of $\pi_s$, decide if $\sep$ lies either between $\cp$ and $v$ or between $v$ and $s$ in $\pi_s$.
By Lemma~\ref{lemma:TangentToQ}, we only need to decide if $C(v)$ is a separating circle.

We will use an operation on the vertices of $\mathcal{V}(P)$ called $\textsc{FindPointBetween}$ with the following properties. 
Given two vertices $u,v$ in $\pi_s$, $\textsc{FindPointBetween}(u,v)$ returns a vertex $z$
that splits the path on $\pi_s$ joining $u$ and $v$ into two subpaths. 
Moreover, if we use our oracle to discard one of the subpaths and to proceed recursively on the other, then we want $\textsc{FindPointBetween}(u,v)$ to guarantee that this recursive process ends after $O(\log n)$ steps. That is, after $O(\log n)$ iterations, the search interval becomes only an edge of $\pi_s$ containing $\sep$.

A data structure build on top of $\mathcal{V}(P)$ that support this operation was presented in~\cite{ConstrainedMinimumEnclosingCircleWithCenterOnAQueryLineSegment}. This data structure can be constructed in $O(n)$ time and uses linear space by storing a constant number of pointers on each vertex of $\mathcal{V}(P)$.


\subsection{Searching for $\sep$ on the tree}
Recall that if $\CP$ is  a separating circle then it is a trivial solution.  
Since $Q$ is a convex $m$-gon, this can be checked easily in $O(\log m)$ time~\cite{ComputingExtremeDistancesBetweenConvexPolygons}. 
Thus we will assume that $\CP$ is not the minimum separating circle, 
which implies that $\CP$ intersects $Q$.

To determine the position of $\sep$ on $\mathcal{V}(P)$, we
first  find a separating point $s$ and then
 search for $\sep$ on $\pi_s$ using our data structure.

To find $s$, we construct a separating line $L$ between $P$ and $Q$. 
This can be done in $O(\log n + \log m)$ time~\cite{ComputingExtremeDistancesBetweenConvexPolygons}.
Let $p_{_L}$ be the point of $P$ closest  to $L$ and assume that no other point in $P$ lies at the same distance; otherwise rotate $L$ slightly.
Let $L_{\perp}$ be the perpendicular to $L$ that contains $p_{_L}$ and let $s$ be the intersection 
of $L_{\perp}$ with the boundary of $R(p_{_L})$; see Figure~\ref{fig:PrimerCirculoSeparador}.
We know that  $L_\perp$ intersects $R(p_{_L})$ because 
 $L$ can be considered as a $P$-circle, containing only $p_{_L}$, with center at infinity  on $L_\perp$.

\begin{figure}[h!]
\begin{center}
\includegraphics[width=65mm]{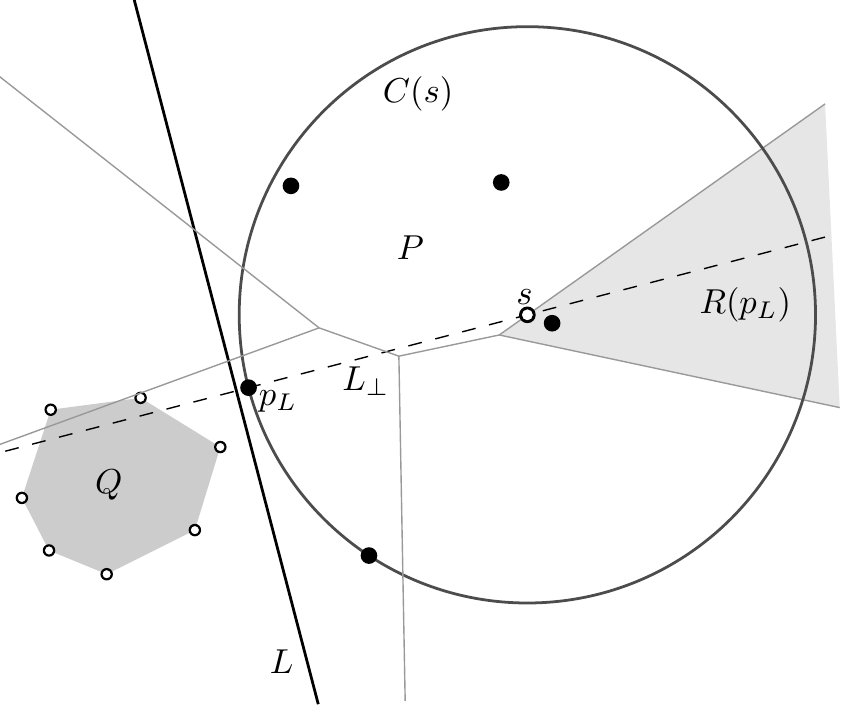}
\caption{\small Construction of $s$.} 
\label{fig:PrimerCirculoSeparador}
\end{center}
\end{figure}

Since $s$ is on the boundary of $R(p_{_L})$,  $C(s)$ passes through $p_{_L}$. Furthermore $C(s)$ is contained in the same halfplane defined by $L$ that contains $P$. 
So $C(s)$ is a separating circle.
Assume that $s$ lies on the edge $\overline{xy}$ of $\mathcal{V}(P)$ with $\rho(x) > \rho(y)$ and 
let $\pi_{s} = ( u_0 = s, u_1 = y, \ldots, u_r = \cp)$ be the path of length $r+1$ joining $s$ with $\cp$ in $\mathcal{V}(P)$. Theorem~\ref{sep in pi_s} implies that $\sep$ lies on $\pi_s$.\\

It is possible to use our data structure to perform a binary search on the vertices of $\pi_s$, computing, at each vertex $v$, the distance to $Q$ and the radius of $C(v)$. This way we could determine if $C(v)$ is a separating (or intersecting) circle.
However, this approach involves computing the distance to $Q$ at each step in $O(\log m)$ time, and thus takes  $O(\log n\cdot\log m)$ time.  This was the algorithm given in~\cite{DynamicCircleSeparability}.

It is worth noting that if the distance to our query object can be computed in $O(1)$ time, 
then the described algorithm takes $O(\log n)$ time. This is the case when instead of being an $m$-gon, $Q$ is either a circle or a point. The next result follows.

\begin{observation}\label{corollary:ResultForCircleAndPoints}
After preprocessing a set $P$ of $n$ points in $O(n \log n)$ time,
the minimum separating circle between $P$ and any given circle or point
can be found in $O(\log n)$ time.
\end{observation}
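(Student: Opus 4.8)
The plan is to run the algorithm of Section~4 essentially verbatim and observe that, when $Q$ is a single point or a single circle, every step that costs $O(\log m)$ for a convex $m$-gon now costs $O(1)$, so the query time drops to $O(\log n)$. The preprocessing is unchanged: in $O(n\log n)$ time compute $\mathcal{V}(P)$ and $\cp$, root the tree at $\cp$, store each region $R(p)$ as a convex polygon, and build the \textsc{FindPointBetween} structure of~\cite{ConstrainedMinimumEnclosingCircleWithCenterOnAQueryLineSegment} on top of $\mathcal{V}(P)$ in $O(n)$ extra time. Since the vertices of $\hull{P}$ are exactly the sites of $P$ with nonempty farthest-point regions, listed in cyclic hull order, I would also store $\hull{P}$ in an array supporting the standard $O(\log n)$ extreme-vertex and nearest-point queries.

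For a query I would follow the phases of Section~4.2. First, test whether $\CP$ already separates: for a point $q$ this is the single comparison $d(\cp,q)\ge\rho(\cp)$, and for a circle of center $o$ and radius $r$ it is $d(\cp,o)\ge\rho(\cp)+r$, both $O(1)$; otherwise $\CP$ meets $Q$ and we continue. Next, build a separating line $L$ from the nearest point of $\hull{P}$ to $Q$: for a point $q$, take the nearest point $c^{*}$ of $\hull{P}$ to $q$ and let $L$ pass through the gap perpendicular to the segment $q\,c^{*}$; for a circle, take the nearest point of $\hull{P}$ to the center $o$ and let $L$ be perpendicular to that direction and tangent to $Q$ on the side of $\hull{P}$. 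This is exactly the place where an $m$-gon forces $O(\log n+\log m)$ but a point or a circle needs only one $O(\log n)$ hull query. Finally, construct the separating point $s$ as in the paper: find $p_{_L}$ as the extreme vertex of $\hull{P}$ in the direction of $L$, drop $L_\perp$ through $p_{_L}$, and intersect it with $\partial R(p_{_L})$, all in $O(\log n)$. As argued there $C(s)$ is separating, so Theorem~\ref{sep in pi_s} gives $\sep\in\pi_s$.

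I would then binary search for $\sep$ along $\pi_s$ using \textsc{FindPointBetween}. By Lemma~\ref{lemma:TangentToQ} the predicate ``$C(v)$ is separating'' is, for $v\in\pi_s$, equivalent to $\rho(v)\ge\rho(\sep)$, and Proposition~\ref{Rho is monotonic} makes it monotone along the path, so the search is well defined. At each probed vertex $v$ the oracle must decide this predicate; since $v$ is a vertex of $\mathcal{V}(P)$ its radius $\rho(v)$ is read off in $O(1)$ from a stored defining site, and the separation test is just $d(v,q)\ge\rho(v)$ for a point or $d(v,o)\ge\rho(v)+r$ for a circle. Hence each step is $O(1)$, and by the guarantee of \textsc{FindPointBetween} the search collapses after $O(\log n)$ steps to the single edge $e$ of $\pi_s$ containing $\sep$.

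It remains to pin down $\sep$ inside $e$, which is the only genuinely new computation and where I expect the slight technical difficulty to lie. If $e$ lies on the bisector of $p,p'\in P$ then every $y\in e$ has $\rho(y)=d(y,p)$, and by Lemma~\ref{lemma:TangentToQ} the point $\sep$ is the one whose circle is tangent to $Q$. For a point $q$ this means $d(\sep,q)=d(\sep,p)$, so $\sep$ is the intersection of $e$ with the perpendicular bisector of $p$ and $q$, found in $O(1)$. For a circle of center $o$ and radius $r$ it means $d(\sep,o)=d(\sep,p)+r$, i.e.\ $\sep$ lies on the branch of the hyperbola with foci $o,p$ and constant difference $r$; intersecting this branch with the segment $e$ reduces to solving one quadratic and keeping the root lying on $e$, again $O(1)$. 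Correctness throughout rests on Theorem~\ref{sep in pi_s}, so that the searched path really contains $\sep$, and on Lemma~\ref{lemma:TangentToQ}, so that the monotone predicate isolates $\sep$ and the tangent circle we compute is the minimum separating circle; summing the four $O(\log n)$ phases yields the stated query bound.
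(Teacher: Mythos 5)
Your proposal is correct and follows essentially the same route as the paper: run the single binary search on $\pi_s$ guided by $\textsc{FindPointBetween}$, observing that every distance-to-$Q$ computation that costs $O(\log m)$ for a convex $m$-gon becomes $O(1)$ when $Q$ is a point or a circle. The additional details you supply (the $O(\log n)$ construction of the initial separating line and of $s$, and the $O(1)$ tangency computation inside the final edge) are precisely the steps the paper leaves implicit.
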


In the case of $Q$ being a convex $m$-gon, an improvement from the $O(\log n \log m)$ time algorithm can be obtained by strongly using the convexity of $Q$.

To determine if some point $v$ on $\pi_s$ is a separating point, it is not always necessary to
compute the distance between $v$ and $Q$.
One can first test, in constant time, if $C(v)$ intersects a  separating line tangent to $Q$.
If it does not intersect it, then $C(v)$ would be a separating circle and we can proceed with the binary search.
Otherwise, we can try to compute a new separating line tangent to $Q$ not intersecting $C(v)$.
The advantage of this is that while doing so, we reduce the portion of $Q$ that we need to consider in the future.
This is done as follows.


Compute the two internal tangents $L, L'$ between the convex hull of $P$ and $Q$ in $O(\log n + \log m)$ time. The techniques to construct these tangents are shown in Chapter 4 of~\cite{PreparataAndShamos}.

Let $q$ and $q'$ be the respective tangency points of $L$ and $L'$ with the boundary of $Q$. 
We will consider the clockwise polygonal chain $\varphi = [q = q_0, \ldots, q_k = q']$
 joining $q$ and $q'$ as in Figure~\ref{fig:InternalTangents}. 
 
 \begin{figure}[h!]
\begin{center}
\includegraphics{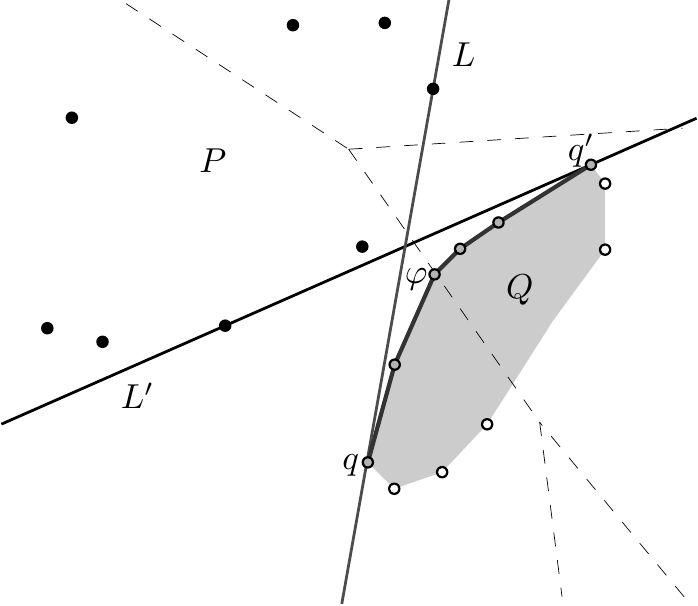}
\caption{\small The construction of $\varphi$.} 
\label{fig:InternalTangents}
\end{center}
\end{figure}

Recall that $\tang$ denotes the intersection point between $\Sep$ and the boundary of $Q$ and note that the tangent line to $\Sep$ at $\tang$ is a separating line.
Therefore, $\tang$ must lie on an edge of $\varphi$ since no separating line passes through any other boundary point of $Q$.

If $q = q'$, then $\tang = q$ and thus, we can forget about $Q$ and compute the minimum separating circle between $P$ and $q$.
As mentioned previously, by Observation~\ref{corollary:ResultForCircleAndPoints} this takes  $O(\log n)$ time.

Assume from now on that $q\neq q'$, as shown in  Figure~\ref{fig:InternalTangents}.
For each edge $e_i = q_iq_{i+1}$  ($0\leq i\leq k-1$) of $\varphi$, let $\ell_i$ be the line extending that edge. By construction, we know that each $\ell_i$ separates  $P$ and $Q$. We say that a point $x$ on $\ell_i$ but not in $e_i$ lies to the left of $e_i$ if it is closer to $q_i$, or to the right if it is closer to $q_{i+1}$.

Our algorithm will essentially perform two parallel binary searches, the first one on $\pi_s$ and the second one on $\varphi$, such that at each step we discard either a section of $\pi_s$ or a linear fraction of $\varphi$.   As we search on $\pi_s$, every time we find a separating circle, we move towards $\cp$.
When we confirm that a $P$-circle intersects $Q$, we move away from $\cp$.

As mentioned, we attempt to confirm if the current vertex $v$ being analyzed corresponds to a separating point. To do that, we compare $C(v)$ to some separating line $\ell_i$ for intersection, i.e., in constant time. 
If $C(v)$ is a separating circle, we instantly discard the section of the path lying below $v$ on $\mathcal{V}(P)$.  
If $C(v)$ does intersect $\ell_i$, we make a quick attempt to check if $C(v)$ intersects $Q$ by comparing $C(v)$ and the edge $e_i$ for intersection.  If so, $v$ is not a separating point and we can proceed with the binary search on $\pi_s$. Otherwise, the intersection of $C(v)$ with $\ell_i$ lies either to the left or to the right of $e_i$.
However, in this case we are not able to quickly conclude whether $C(v)$ intersects $Q$ or not.  Thus, we temporarily suspend the binary search on  $\mathcal{V}(P)$ and focus on $C(v)$, using it to eliminate half of $\varphi$.   
Specifically, the fact that $C(v)$ intersects $\ell_i$ to one side of $e_i$ (right or left) tells us that no future $P$-circle on our search will intersect  $\ell_i$ to the other side of $e_i$. 
This implicitly discards half of $\varphi$ from future consideration, and is discussed in more detail in the Theorem that follows.   
Thus, in constant time, we manage to remove a section of the path $\pi_s$, or half of $\varphi$, which will give the desired time bound.
The entire process is detailed in Algorithm~\ref{alg:DualBinarySearch}. 

\begin{algorithm}
\caption{Given $\pi_{s} = ( u_0 = s, u_1 = y, \ldots, u_r = \cp)$ and $\varphi = [q = q_0, \ldots, q_k = q']$,  find the edge of $\pi_s$ containing $\sep$}
\begin{algorithmic}[1]\label{alg:DualBinarySearch}
	\STATE Define the endpoints of the subpath of $\pi_s$ containing $\sep$, $u\gets s, v\gets \cp$
	\STATE Define the initial search interval on $\varphi$, $a\gets 0, b\gets k$
	\IF{$u$ and $v$ are consecutive vertices and $b= a+1$}\label{alg:MoveOn}
		\STATE End and return the segment $S=[u,v]$ and the segment $H = [q_a, q_b]$
	\ENDIF
	\STATE Let $z\gets \textsc{FindPointBetween}(u,v)$,  $j\gets \lfloor \frac{a+b}{2}\rfloor$ 	
	\STATE Let $e_j \gets \overline{q_j q_{j+1}}$ and let $\ell_j$ be the line extending $e_j$
	\IF{$b > a+1$} \label{alg:comparison}
		\STATE Compute $\rho(z)$ and 
		let $\delta \gets d(z, \ell_j)$, $\Delta\gets d(z, e_j)$
	\ELSE
		\STATE Compute $\rho(z)$ and let $\delta \gets d(z, e_j)$, $\Delta \gets d(z, e_j)$
	\ENDIF
	\IF{$\rho(z) \leq \delta$, that is $C(z)$ is a separating circle}
		\STATE Move forward on $\pi_s$, 
		$u\gets z$ and return to step~\ref{alg:MoveOn} \label{alg:redefine_u}
	\ELSE
		\IF{$\rho(z) > \Delta$ , that is if $C(z)$ is not a separating circle}
			\STATE Move backward on $\pi_s$, 
			$v\gets z$ and return to step~\ref{alg:MoveOn}  \label{alg:redefine_v}
		\ELSE
			\IF{$C(z)$ intersects $\ell_j$ to the left of $e_j$}
				\STATE We discard the polygonal chain to the right of $e_j$, 
				$b \gets \max\{j, a+1\}$  \label{alg:redefine_b}
			\ELSE
				\STATE We discard the polygonal chain to the left of $e_j$, 
				$a \gets j$   \label{alg:redefine_a}
			\ENDIF
			\STATE Return to step~\ref{alg:MoveOn}
		\ENDIF
	\ENDIF
\end{algorithmic}
\end{algorithm}

\begin{theorem}
Algorithm~\ref{alg:DualBinarySearch} finds the edge of $\pi_s$ containing $\sep$ in $O(\log n + \log m)$ time.
\end{theorem}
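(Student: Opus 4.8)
The plan is to prove correctness and the running time separately, both organized around a loop invariant for Algorithm~\ref{alg:DualBinarySearch}. The invariant I would maintain at the top of step~\ref{alg:MoveOn} is: $\sep$ lies on the subpath of $\pi_s$ between $u$ and $v$, and $\tang$ lies on the subchain $[q_a, q_b]$ of $\varphi$. Initially $u = s$, $v = \cp$, $a = 0$, $b = k$, and the invariant holds because Theorem~\ref{sep in pi_s} places $\sep$ on $\pi_s$ and because $\tang$ was already shown to lie on an edge of $\varphi$ (no separating line meets $\partial Q$ outside $\varphi$). When the loop halts, the subpath is a single edge of $\pi_s$ and $b = a+1$, so the returned segment $S = [u,v]$ is the edge of $\pi_s$ containing $\sep$; hence correctness reduces to showing each branch preserves the invariant.

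For the two path-advancing branches this is immediate from Lemma~\ref{lemma:TangentToQ}. If $\rho(z) \le \delta = d(z, \ell_j)$ then $C(z)$ does not cross the separating line $\ell_j$, so $C(z)$ is a separating circle and, by Lemma~\ref{lemma:TangentToQ}, $\rho(z) \ge \rho(\sep)$, i.e.\ $\sep$ lies between $\cp$ and $z$; setting $u \gets z$ (step~\ref{alg:redefine_u}) keeps $\sep \in [u,v]$. Symmetrically, if $\rho(z) > \Delta = d(z, e_j)$ then $C(z)$ meets the edge $e_j \subset Q$, so $C(z)$ is not separating, $\rho(z) < \rho(\sep)$, and $\sep$ lies between $z$ and $s$; setting $v \gets z$ (step~\ref{alg:redefine_v}) again keeps $\sep \in [u,v]$. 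Neither branch changes $a$ or $b$, so the chain part of the invariant is untouched.

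The crux --- and the step I expect to be the main obstacle --- is the ambiguous branch $\delta < \rho(z) \le \Delta$, where $C(z)$ crosses the line $\ell_j$ but not the edge $e_j$, and I must argue that discarding half of $\varphi$ does not lose $\tang$. Since $\delta < \Delta$ forces the foot of the perpendicular from $z$ to $\ell_j$ outside $e_j$, the chord $C(z)\cap \ell_j$ lies entirely to one side of $e_j$, so the left/right test is well defined. The geometric content I would establish is a monotonicity of the crossing side along the convex chain: because each $\ell_i$ is an internal tangent with $P$ and $Q$ strictly separated and $Q$ convex, the optimal $\Sep$ --- tangent to $Q$ at $\tang$ and containing no point of $Q$ in its interior --- crosses $\ell_j$ to the right of $e_j$ exactly when $\tang$ lies on the part of $\varphi$ to the right of $e_j$, to the left exactly when $\tang$ lies to the left, and within $e_j$ precisely when $\tang \in e_j$. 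The second, more delicate ingredient is that the probe $C(z)$ reports the \emph{same} side as $\Sep$ would: any two $P$-circles that cross $\ell_j$ but not $e_j$ must cross on the same side of $e_j$, because they both enclose the common set $P$ lying on the far side of $\ell_j$, which pins the side of the protruding cap. Combining these two facts, the side on which $C(z)$ meets $\ell_j$ correctly reports which half of $\varphi$ contains $\tang$, so steps~\ref{alg:redefine_b} and~\ref{alg:redefine_a} preserve the invariant; the guard $\max\{j, a+1\}$ together with the switch to the exact test $\delta = \Delta = d(z, e_j)$ once $b = a+1$ handles the boundary case in which $\tang$ falls on the probed edge itself. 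Verifying this same-side property cleanly, and reconciling it with the edge case $\tang \in e_j$, is where the convexity of $Q$ is used in an essential way and is the part that requires the most care.

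For the running time, the per-query preprocessing that sets up $L$, $s$, the two internal tangents, and the chain $\varphi$ costs $O(\log n + \log m)$ by the cited tangent and extreme-distance routines~\cite{ComputingExtremeDistancesBetweenConvexPolygons,PreparataAndShamos}. Each iteration runs in $O(1)$: the call $\textsc{FindPointBetween}(u,v)$ returns its vertex in constant time using the structure of~\cite{ConstrainedMinimumEnclosingCircleWithCenterOnAQueryLineSegment}, and computing $\rho(z)$, the distances $\delta, \Delta$, and the crossing side are all constant-time tests of a circle against a line or a single segment. It remains to bound the number of iterations. Every iteration either advances the search on $\pi_s$ (steps~\ref{alg:redefine_u},~\ref{alg:redefine_v}) or halves the window $[a,b]$ on $\varphi$ (steps~\ref{alg:redefine_b},~\ref{alg:redefine_a}). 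The halvings number $O(\log m)$ since $b-a$ shrinks by a constant factor each time, and the path-advances number $O(\log n)$ by the defining guarantee of $\textsc{FindPointBetween}$ that discarding one of its two subpaths and recursing reaches a single edge in $O(\log n)$ steps. Summing gives $O(\log n + \log m)$ iterations and hence the claimed bound.
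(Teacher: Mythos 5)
Your overall architecture matches the paper's proof almost exactly: the two invariants ($\sep$ stays on the subpath between $u$ and $v$, $\tang$ stays on the subchain $[q_a,q_b]$), the treatment of the two path-advancing branches via Lemma~\ref{lemma:TangentToQ}, the special handling of the $b=a+1$ case to avoid looping, and the iteration count ($O(\log m)$ halvings of $\varphi$ plus $O(\log n)$ calls to $\textsc{FindPointBetween}$, each iteration $O(1)$). The gap is precisely at the step you yourself flagged as the crux, and the justification you propose there is false. You claim that any two $P$-circles that cross $\ell_j$ but not $e_j$ must cross on the same side of $e_j$ ``because they both enclose the common set $P$, which pins the side of the protruding cap.'' Containing $P$ does not pin the side. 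Take $\ell_j$ to be the $x$-axis, $e_j=[(0,0),(1,0)]$, $Q$ a thin triangle hanging below $e_j$, and $P$ a tiny cluster of points near $(0.5,5)$. A large circle centered far to the upper left (say near $(-74,10)$, with radius just big enough to swallow $P$) contains $P$ while the portion of its disk below the $x$-axis lies strictly in $\{x<0\}$; its mirror image about the vertical line $x=0.5$ contains $P$ while dipping below the axis only over $\{x>1\}$. Both are $P$-circles (in fact both are separating circles), both cross $\ell_j$, neither meets $e_j$, yet they cross on opposite sides of $e_j$. So the ``same-side'' property cannot be derived from $P$-circle-ness alone, and with it falls your invariant-preservation argument for steps~\ref{alg:redefine_b} and~\ref{alg:redefine_a}.

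What makes the claim true --- and what the paper's proof actually uses --- is that the only circles that matter are $C(w)$ for $w$ on the single root path $\pi_s$, and for those the caps are nested: Proposition~\ref{prop:PushingCricles} gives $C^+(x)\subset C^+(y)$ whenever $x,y\in\pi_s$ and $\rho(x)>\rho(y)$. The paper argues: suppose some separating circle $C(w)$ with $w\in\pi_s$ met $\ell_j$ to the right of $e_j$, while $C(z)$ meets it to the left and misses $e_j$. If $\rho(w)<\rho(z)$, the intersection point $x'\in C(z)\cap\ell_j$ lies in $C^+(z)\subset C^+(w)\subset C(w)$, so the disk $C(w)$ contains points of $\ell_j$ on both sides of $e_j$; convexity of the disk then forces $e_j\subset C(w)$, contradicting that $C(w)$ is separating. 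If $\rho(w)>\rho(z)$, the same argument traps $e_j$ inside $C(z)$, contradicting that $C(z)$ misses $e_j$. This nesting property is the ingredient your crux step is missing (you never invoke Proposition~\ref{prop:PushingCricles} there); the convexity of $Q$, which you single out as the essential mechanism, is not what drives the argument --- convexity of the disks and the path structure of the centers are.
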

\begin{proof}
Our algorithm maintains two invariants.
The first  is that $C(u)$ is never a separating circle and $C(v)$ is always a separating circle. 
To begin with, $C(u) = C(s)$ is a separating circle, while
 $C(v) = \CP$ is not a separating circle. If either of these assumptions does not hold, the problem is solved trivially, without resorting to this algorithm.
Changes to $u$ and $v$ occur in steps~\ref{alg:redefine_u} or~\ref{alg:redefine_v}, and in both cases the invariant is preserved. Thus, $\sep$ always lies on the path joining $u$ with $v$.

The second invariant is that $\tang$, the tangency point between $\Sep$ and $Q$, 
always lies on the clockwise path joining $q_a$ with $q_b$ along $\varphi$. We already explained that the invariant holds when $a=0$ and $b=k$, corresponding to the inner tangents supporting $P$ and $Q$.
Thus we only need to look at steps~\ref{alg:redefine_b} and~\ref{alg:redefine_a}, where $a$ and $b$ are redefined. 

We will analyze step~\ref{alg:redefine_b}, however step~\ref{alg:redefine_a} is analogous. 
In step~\ref{alg:redefine_b} we know that $C(z)$ intersects $\ell_j$ to the left of $e_j$ and that $e_j$ does not intersect $C(z)$.
We claim that for every point $w$ lying on an edge of $\pi_s$, if $C(w)$ is a separating circle that intersects $\ell_j$, then it intersect it to the left of $e_j$. Note that if our claim is true, we can forget about the polygonal chain lying to the right of $e_j$ since no separating circle will intersect it.
To prove our claim, suppose that there is a point $w$ on $\pi_s$, such that $C(w)$ is a separating circle and $C(w)$ intersects $\ell_j$ to the right of $e_j$. 
Let $x$ and $x'$ be  two points on the intersection of $\ell_j$ with  $C(w)$ and $C(z)$, respectively.
Suppose first that $\rho(w)<\rho(z)$ and recall that by Proposition~\ref{prop:PushingCricles}, since $x'$ lies on $C^+(z)\subset C^+(w)$, $x'$ lies in $C(w)$. 
Thus, both $x$ and $x'$ belong to $C(w)$ which by convexity implies that $e_j$ is contained in $C(w)$. Therefore $C(w)$ is not a separating circle which is a contradiction.
Analogously, if $\rho(w) > \rho(z)$, then $e_j$ is contained in $C(z)$ which is directly a contradiction since we assumed the opposite; our claim holds.\\

Note that in each iteration of the algorithm, $a,b,u$ or $v$ are redefined so that either a linear fraction of $\varphi$ is discarded, or a part of $\pi_s$ is discarded and a new call to $\textsc{FindPointBetween}$ is performed. Recall that our data structure guarantees that $O(\log n)$ calls to $\textsc{FindPointBetween}$ are sufficient to reduce the search interval in $\pi_s$ to an edge~\cite{ConstrainedMinimumEnclosingCircleWithCenterOnAQueryLineSegment}. Thus, the algorithm finishes in $O(\log n + \log m)$ iterations. 

One extra detail needs to be considered when $b = a+1$. In this case only one edge $e = [q_a, q_{a+1}]$ remains from $\varphi$, and $\tang$ lies on that edge. Thus, if the line $\ell$ extending $e$ intersects $C(z)$ but $e$ does not, then either step \ref{alg:redefine_b} or \ref{alg:redefine_a} is executed. However, nothing will change in these steps and the algorithm will loop.
In order to avoid that, we check in step~\ref{alg:comparison} if only one edge $e$ of $\varphi$ remains. 
If this is the case, we know by our invariant that $\tang$ belongs to that edge and therefore we continue the search computing the distance to $e$ instead of computing the distance to the line extending it.
This way, the search on $\varphi$ stops but it continues on $\pi_s$ until the edge of $\mathcal{V}(P)$ containing $\sep$ is found.

Since we ensured  that every edge in $\mathcal{V}(P)$ has pointers to the points in $P$ that defined it, every step in the algorithm can be executed in $O(1)$ time. Thus, we conclude that Algorithm~\ref{alg:DualBinarySearch} finishes in $O(\log n + \log m)$ time. 

Since both invariants are preserved during the execution, Lemma~\ref{lemma:TangentToQ} implies that the algorithm returns segments $[u, v]$ from $\pi_s$ containing $\sep$, and $[q_a, q_{b}]$ from $\varphi$ containing $\tang$.
\end{proof}

From the output of Algorithm~\ref{alg:DualBinarySearch} it is trivial to obtain $\sep$ in constant time,
so we conclude the following.

\begin{corollary}
After preprocessing a set $P$ of $n$ points in $O(n \log n)$ time,
the minimum separating circle between $P$ and any query convex $m$-gon
can be found in $O(\log n + \log m)$ time.
\end{corollary}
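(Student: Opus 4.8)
The plan is to assemble the preprocessing guarantees and the query procedure established above into the two claimed bounds; because Algorithm~\ref{alg:DualBinarySearch} and its timing have already been proved, the Corollary is essentially a matter of collecting the pieces. For the preprocessing, I would first construct the farthest-point Voronoi diagram $\mathcal{V}(P)$ together with $\cp$ in $O(n\log n)$ time, store it as a binary tree in which every vertex and edge carries pointers to its defining points of $P$ and every region $R(p)$ is kept as a convex polygon, and then build on top of it the data structure supporting $\textsc{FindPointBetween}$, which costs only $O(n)$ additional time and linear space. Since the dominant term is the Voronoi construction and all auxiliary structures are linear, the total preprocessing is $O(n\log n)$ time and $O(n)$ space.

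For the query bound I would proceed in three stages. First, I test whether $\CP$ already separates $P$ from $Q$; since $Q$ is convex this takes $O(\log m)$ time, and if it succeeds I return $\CP$ as the (trivially minimum) separating circle. Otherwise $\CP$ intersects $Q$, and I compute a separating line and from it a separating point $s$ on the boundary of $R(p_{_L})$ in $O(\log n+\log m)$ time; by Theorem~\ref{sep in pi_s} the sought center $\sep$ then lies on $\pi_s$. Second, I compute the two internal tangents between $\hull{P}$ and $Q$ and extract the clockwise chain $\varphi$ between their tangency points in $O(\log n+\log m)$ time; by the argument preceding Algorithm~\ref{alg:DualBinarySearch}, the tangency point $\tang$ must lie on an edge of $\varphi$.

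Third, I run Algorithm~\ref{alg:DualBinarySearch} on $\pi_s$ and $\varphi$. The theorem just proved guarantees that it terminates in $O(\log n+\log m)$ time and returns a one-edge subpath $[u,v]$ of $\pi_s$ containing $\sep$ together with a one-edge subchain $[q_a,q_b]$ of $\varphi$ containing $\tang$. Each iteration runs in constant time precisely because every edge of $\mathcal{V}(P)$ stores pointers to its defining points, so the per-step quantities $\rho(z)$, $d(z,\ell_j)$, $d(z,e_j)$ and the left/right side test are all computable in $O(1)$. Summing the three stages gives the claimed $O(\log n+\log m)$ query time.

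Finally, I recover $\Sep$ in constant time from this output. On the returned edge of $\pi_s$ the center moves along the bisector of two fixed points $p,p'$ and $\rho$ is monotone by Proposition~\ref{Rho is monotonic}; by Lemma~\ref{lemma:TangentToQ}, $\Sep=C(\sep)$ is the unique separating circle tangent to $Q$, and since $\tang$ is now pinned to a single edge of $\varphi$ lying on a fixed line $\ell$, the tangency condition reduces to the single equation $d(y,\ell)=d(y,p)$ along that bisector, a quadratic in one parameter that I solve in closed form to obtain $\sep$ and report $\Sep=C(\sep)$. The only genuinely delicate part of the whole argument is the coordination of the two interleaved binary searches so that each iteration discards a constant fraction of one search domain in $O(1)$ time; that obstacle is exactly what the preceding theorem resolves, so the Corollary follows by assembling these components.
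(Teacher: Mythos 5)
Your proposal is correct and follows essentially the same route as the paper: the paper derives the corollary by combining the $O(n\log n)$ preprocessing, the $O(\log n + \log m)$ stages (trivial $\CP$ check, construction of $s$ and $\varphi$, Algorithm~\ref{alg:DualBinarySearch}), and the observation that $\sep$ is recovered from the output in constant time. Your only addition is spelling out the closed-form computation of $\sep$ on the final edge, which the paper dismisses as trivial (note only that tangency may occur at an endpoint of $[q_a,q_b]$, so the constant-time case analysis should include $d(y,q_a)=d(y,p)$ as well as $d(y,\ell)=d(y,p)$).
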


\end{document}